
\documentclass[conference]{IEEEtran}
\ifCLASSINFOpdf
  \usepackage[pdftex]{graphicx}
  \graphicspath{{images/}}
  \DeclareGraphicsExtensions{.pdf,.jpeg,.png}
\else
\fi
%
%

%
\usepackage{amsmath}
%
\interdisplaylinepenalty=2500

%

%
\usepackage{array}
\usepackage{url}


\hyphenation{op-tical net-works semi-conduc-tor}

\usepackage{amsthm}

\newtheorem{definition}{Definition}
\newtheorem{lemma}{Lemma}
\newtheorem{theorem}{Theorem}
\newtheorem{corollary}{Corollary}

\begin{document}
%
\title{Regular mixed-radix DFT matrix factorization for in-place FFT accelerators}

\author{\IEEEauthorblockN{Sergey Salishev}
\IEEEauthorblockA{Department of Computer Science\\
Saint Petersburg University\\
 University Embankment, 7/9, 199034, Russia\\
Email: s.salischev@spbu.ru}}


%


\maketitle

\begin{abstract}
The generic vector memory based accelerator is considered which supports DIT and DIF FFT with fixed datapath.
The regular mixed-radix factorization of the DFT matrix coherent with the accelerator architecture is proposed and the correction proof is presented.
It allows better understanding of architecture requirements and simplifies the developing and proving correctness of more complicated algorithms and
conflict-free addressing schemes.
\end{abstract}


%
\IEEEpeerreviewmaketitle

\section{Introduction}
\IEEEPARstart{F}{ast} Fourier Transform (FFT) processor efficiency is crucial for overall performance in many applications. 
In-place memory-based approach to FFT processor architecture is commonly used for many high-throughput communication tasks. 

The use of this approach guarantees that for each butterfly or group of butterflies both inputs and results are stored in the same memory locations, 
so for complex FFT sampled at N points a memory storing N complex words can be used. If faithfully implemented it provides additional 
benefit of runtime reconfiguration for different FFT sizes.

To maximize the throughput the memory bandwidth should be fully utilized. It is achieved if butterfly size is selected equal to number of parallel memory banks and one butterfly is initiated per clock.
To simplify the circuit design and reduce power and area it is also important to remove complex flow control and memory access arbitration.
To do so each wing of the butterfly should read and write non-conflicting memory ports. This requires conflict-free data to bank assignment. 

Johnson~\cite{johnson1992conflict}  suggested an in-place addressing strategy and architecture that allows launch of one butterfly per clock for pure-radix FFT, 
he also suggested modification of this scheme for mixed-radix launching one butterfly per clock. 
In-place FFT algorithms use bit/digit-reverse data permutations. If FFT is used for fast convolution intermediate data can be kept bit/digit-reversed. 
If FFT is a part of complex DSP algorithm data permutation substantially complicates its use. This can be resolved by so-called ``self-sorting'' in-place FFT which 
has both input and output data in natural order.

Jo and Sunwoo~\cite{jo2005new} suggested an in-place addressing strategy and architecture for radix 4/2 FFT launching 2 radix 2 butterflies in radix-2 stage simultaneously. 
Hsiao, Chen and Lee~\cite{hsiao2010generalized} suggested an in-place addressing strategy and architecture for arbitrary mix-radix FFT launching one butterfly
per clock. 
All the above algorithms use dual-port (1r1w) SRAM.
Salishev~\cite{salishev2014continuous} proposed an unified accelerator architecture and corresponding conflict-free addressing strategies for mixed-radix $R/r$ 
FFT of DIT, DIF, and self-sorting type and with some modifications can utilize single-port (1rw) SRAM. 
Later Salishev and Shein~\cite{salishev2013fft} presented proofs of correctness for these addressing strategies in rather technical manner by analysis of butterfly wing indices for possible conflicts. 

The paper revisits results of~\cite{salishev2013fft}.
The contribution of this paper is a regualr mixed-radix factorization of DFT matrix which is coherent with the accelerator architecture and 
supports FFT length and DIT/DIF selection in runtime with a fixed datapath (theorems~\ref{FFT_iter},~\ref{FFT_iter_freqW}). 
Each linear operator is mapped to specific stage of the datapath. The DIF/FIF and FFT length mode selection only requires runtime reconfiguration of the address generators.

\section{General approach to memory-based FFT accelerators} 

Memory-based continuous-flow FFT algorithms compute one butterfly per clock without blocking, and allow 
\begin{itemize}
\item variable in runtime FFT length;
\item usage of SRAM memory IP from standard library;
\item resource sharing between multiple algorithms.
\end{itemize}

The scheme of radix $R$ FFT IP block is shown on the Fig.~\ref{fig:fft-base}. This block can be considered as an anti-machine.
Processing Unit (PU) computes one butterfly per clock, using $R$ words from each of the memory banks. The PU computing resources can be used for 
vector operations of complex addition and multiplication, calculating elementary functions, etc.

\begin{figure}[htbp]
	\center
    \includegraphics[width=0.8\linewidth]{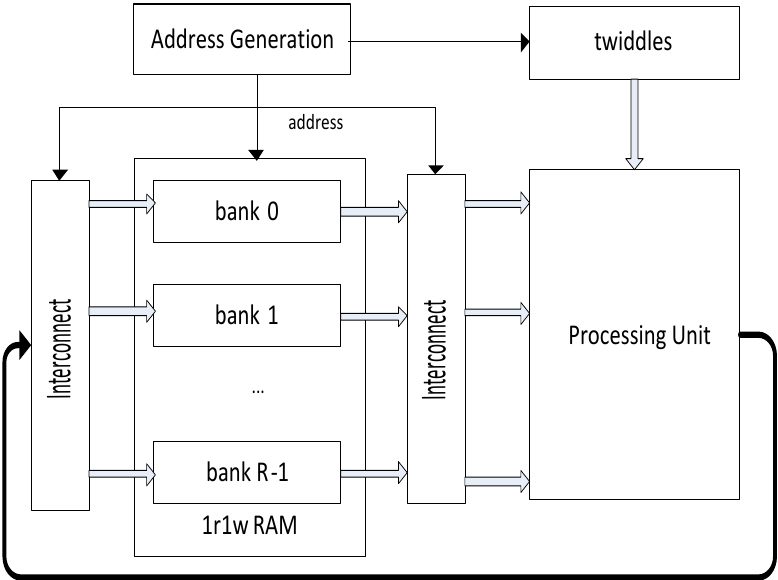}\\
    \caption{Architecture of FFT IP 1r1w RAM.}
    \label{fig:fft-base}
\end{figure}

Let us consider FFT of length $N=R^q$. 
If the transposition in the Interconnect block has 0 delay and reads/writes happen in the same clock (overlap), then 
the FFT computation time is 
$$
T(N) = \frac{N}{R} \log_R(N) + C_p,
$$
where $C_p$ --- is the PU pipeline length. 

We have set the following constraints for the algorithms in consideration:
\begin{itemize} 
\item compute time is $T(N)$
\item additional memory size is independent from $N$
\item using dual-port 1rw SRAM;
\item full memory bandwidth utilization; 
\item no dynamic memory access arbitration;
\item arbitrary butterfly radix;
\item variable in runtime FFT length; 
\end{itemize}

Let us consider FFT accelerator with 1r1w SRAM. 

\section{FFT splitting rule}

Let us $n\in\sf N$ and $\omega_n=e^{-\frac{2\pi i }{n}}$. DFT of length $n$ is a linear operator 
$$
{\mathcal F}_n=[\omega^{kl}_n]_{0\leq k,l<n}.
$$
For any $N>0$ and $0\le j<N$ denote by $e_{j,N}$ $j$--th unit vector in ${\sf R}^N$. 

Let us denote by $\otimes$ Kronecker product: $C=A\otimes B$, $A\in\sf R^{m\times n}$ and $B\in\sf R^{k\times \ell}B$, 
$C_{ik+j, p\ell+q} = A_{i,j} B_{p,q}$ for $0\le i< m$, $0\le j< n$, $0\le p< k$, $0\le q< \ell$. 

Let us $m, k \in \sf N$ and $n=km$. Set of $(e_{i,k}\otimes e_{j,m})$ for $0\le i<k$, $0\le j<m$ is a standard basis in ${\sf R}^n$. 

Let us define transposition matrix $L^n_k$ and exponent matrix $W^n_m$ of size $n$, $0\le i<k$, $0\le j<m$,
\begin{eqnarray*}
L^n_k (e_{i,k} \otimes e_{j,m}) & = & e_{j,m} \otimes e_{i,k}, \\
W^n_m (e_{i,k} \otimes e_{j,m}) & = & \omega_n^{ij} (e_{i,k} \otimes e_{j,m}).
\end{eqnarray*}

Transposition $L^n_k$ does the following index mapping:
$$
l^n_k : im+j \to jk+i.
$$
From the definition of $L^n_k$, for any $x\in {\sf R}^k$ and $y\in {\sf R}^m$: $$L^n_k (x\otimes y) = y \otimes x.$$
 
Matrix $W^n_m$ can be represented as follows:
$$
W^n_m = \left(
\begin{array}{lll}
	(V^m_n)^0& & \\
	&\ddots& \\
	& &(V^m_n)^{k-1} \\
\end{array}
\right),$$
$$
V^m_n=\left(
\begin{array}{lll}
	(\omega_n)^0& & \\
	&\ddots& \\
	& &(\omega_n)^{m-1} \\
\end{array}
\right).
$$

\begin{lemma} \label{aux_LW}
Let us $n=km$. Then
\begin{eqnarray*}
L^n_k & = & (L^n_m)^{-1} = (L^n_m)^T, \\
W^n_m & = & L^n_m W^n_k L^n_k.
\end{eqnarray*}
\end{lemma}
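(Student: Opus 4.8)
The plan is to verify both identities by evaluating each side on the standard basis $\{e_{i,k}\otimes e_{j,m}\}$ of ${\sf R}^n$, since two linear operators that agree on a basis are equal. The only care required is index bookkeeping: the subscript on $L$ and on $W$ fixes which tensor factor carries which dimension, so after a transposition I must reread the result against the correct domain before applying the next operator.

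For the first line I would first observe that $L^n_k$ and $L^n_m$ are permutation matrices, since by definition each sends a standard basis vector to a standard basis vector; hence they are orthogonal and $(L^n_m)^{-1}=(L^n_m)^T$ is immediate. It then remains to show $L^n_k=(L^n_m)^{-1}$, equivalently $L^n_m L^n_k = I$. Applying the definitions, $L^n_k(e_{i,k}\otimes e_{j,m}) = e_{j,m}\otimes e_{i,k}$; now $e_{j,m}\otimes e_{i,k}$ is a product whose first factor has dimension $m$ and second has dimension $k$, which is exactly the domain on which $L^n_m$ is defined, so $L^n_m(e_{j,m}\otimes e_{i,k}) = e_{i,k}\otimes e_{j,m}$. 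Thus $L^n_m L^n_k$ fixes every basis vector and equals $I$.

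For the second line I would compute $L^n_m W^n_k L^n_k$ on $e_{i,k}\otimes e_{j,m}$ in three steps. The transposition gives $e_{j,m}\otimes e_{i,k}$; reading this against the domain of $W^n_k$ (first factor dimension $m$ with index $j$, second factor dimension $k$ with index $i$) produces the scalar $\omega_n^{ji}$; finally $L^n_m$ transposes back to $e_{i,k}\otimes e_{j,m}$. The outcome is $\omega_n^{ji}(e_{i,k}\otimes e_{j,m})$, which coincides with $W^n_m(e_{i,k}\otimes e_{j,m}) = \omega_n^{ij}(e_{i,k}\otimes e_{j,m})$ because $ij=ji$. Agreement on the basis yields the matrix identity.

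The main obstacle is purely notational: one must keep straight that $W^n_k$ is diagonal relative to the basis $e_{\cdot,m}\otimes e_{\cdot,k}$ whereas $W^n_m$ is diagonal relative to $e_{\cdot,k}\otimes e_{\cdot,m}$, so that conjugation by $L$ is precisely what realigns the two diagonal structures. Once the commutativity of the exponent product is invoked, the identity follows with no further computation.
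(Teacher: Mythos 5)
Your proof is correct and takes the same route as the paper, whose proof of this lemma is simply the phrase ``From definitions'': you verify both identities by direct evaluation on the standard basis $e_{i,k}\otimes e_{j,m}$, using that $L^n_k$, $L^n_m$ are permutation matrices and that conjugation by the transposition realigns the diagonal structure of $W^n_k$ with that of $W^n_m$ via $\omega_n^{ji}=\omega_n^{ij}$. You have merely filled in the bookkeeping the paper leaves implicit, so nothing further is needed.
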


\begin{proof} 
From definitions.
\end{proof}

Kronecker product is non-commutative. Transposing of $I_k$ and ${\mathcal F}_m$ is performed using $L^n_k$.

\begin{lemma} \label{commutat}
Let's $n=km$. Then
$$
L^n_k (I_k \otimes {\mathcal F}_m) L^n_m = {\mathcal F}_m \otimes I_k.
$$
\end{lemma}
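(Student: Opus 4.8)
The plan is to avoid a coordinate bash by isolating the single structural fact that drives the identity: the transposition $L^n_k$ intertwines a Kronecker product with its swap. First I would establish that for any operators $A$ on ${\sf R}^k$ and $B$ on ${\sf R}^m$,
$$
L^n_k (A \otimes B) = (B \otimes A)\, L^n_k .
$$
This follows immediately from the property $L^n_k(x\otimes y)=y\otimes x$ already recorded in the excerpt: applying the left-hand side to a decomposable vector $x\otimes y$ with $x\in{\sf R}^k$ and $y\in{\sf R}^m$ gives $L^n_k(Ax\otimes By)=By\otimes Ax$, while the right-hand side gives $(B\otimes A)(y\otimes x)=By\otimes Ax$. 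Since decomposable vectors span ${\sf R}^n$ and both sides are linear, the two operators agree.

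Specializing to $A=I_k$ and $B={\mathcal F}_m$ yields $L^n_k (I_k \otimes {\mathcal F}_m) = ({\mathcal F}_m \otimes I_k)\, L^n_k$. Multiplying on the right by $L^n_m$ then gives
$$
L^n_k (I_k \otimes {\mathcal F}_m)\, L^n_m = ({\mathcal F}_m \otimes I_k)\, L^n_k L^n_m .
$$
Finally I would invoke Lemma~\ref{aux_LW}, which states $L^n_k=(L^n_m)^{-1}$, so that $L^n_k L^n_m = I_n$ and the right-hand side collapses to ${\mathcal F}_m \otimes I_k$, as claimed.

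The only point demanding care — and the place where a careless argument would go astray — is the bookkeeping of the tensor-factor dimensions. Here $L^n_k$ acts on the ordering ${\sf R}^k\otimes{\sf R}^m$ (first factor of size $k$), whereas $L^n_m$ acts on the ordering ${\sf R}^m\otimes{\sf R}^k$; the intertwining identity and the cancellation $L^n_k L^n_m=I_n$ are consistent precisely because these two matrices are mutual inverses, which is exactly the content borrowed from Lemma~\ref{aux_LW}. As a cross-check I would, if desired, verify the identity directly on the basis vector $e_{a,m}\otimes e_{b,k}$: tracing $L^n_m$, then $I_k\otimes{\mathcal F}_m$, then $L^n_k$ reproduces $\sum_{c}\omega_m^{ca}(e_{c,m}\otimes e_{b,k})$, which is exactly $({\mathcal F}_m\otimes I_k)(e_{a,m}\otimes e_{b,k})$, confirming the closed-form manipulation.
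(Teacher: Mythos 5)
Your proposal is correct, and while its computational kernel is the same swap-on-decomposables fact the paper uses, it is packaged in a genuinely different way. The paper's proof is a single direct evaluation on basis vectors: for $0\le i<k$, $0\le j<m$ it traces $e_{j,m}\otimes e_{i,k}$ through $L^n_m$, then $I_k\otimes{\mathcal F}_m$, then $L^n_k$, using only the defining property $L(x\otimes y)=y\otimes x$; in particular it never invokes Lemma~\ref{aux_LW}, because the right-hand factor $L^n_m$ is absorbed into the basis computation rather than cancelled. You instead split the argument into two reusable pieces: first the general intertwining identity $L^n_k(A\otimes B)=(B\otimes A)L^n_k$ for arbitrary $A$ on ${\sf R}^k$ and $B$ on ${\sf R}^m$ (the standard commutation-matrix property, proved by the same computation on decomposable vectors), and then the cancellation $L^n_kL^n_m=I_n$, which requires importing $L^n_k=(L^n_m)^{-1}$ from Lemma~\ref{aux_LW}. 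What your route buys is generality and modularity: the intertwining identity holds for all $A$, $B$, so it can be cited again wherever Kronecker factors must be commuted past a transposition, instead of redoing a basis chase each time. What the paper's route buys is economy: it is three lines, self-contained, and free of the dependence on Lemma~\ref{aux_LW}. It is worth noting that your closing cross-check on $e_{a,m}\otimes e_{b,k}$ is, almost word for word, the paper's entire proof.
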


\begin{proof}
Let's $0\le i<k$ and $0\le j < m$. Then
\begin{eqnarray*}
L^n_k (I_k \otimes {\mathcal F}_m) L^n_m (e_{j,m}\otimes e_{i,k}) & = L^n_k (e_{i,k} \otimes ({\mathcal F}_m e_{j,m})) = \\
({\mathcal F}_m e_{j,m}) \otimes e_{i,k} & = ({\mathcal F}_m \otimes I_k)(e_{j,m}\otimes e_{i,k}).
\end{eqnarray*}
which proves the statement of the lemma.
\end{proof}

FFT algorithms are based on the factorization of the DFT matrix using splitting rule as follows.

\begin{lemma} \label{FFT_split}
Let's $n=km$, where $k,m \in \sf N$. Then
$$
{\mathcal F}_n = L^n_k (I_k \otimes {\mathcal F}_m) W^n_m ({\mathcal F}_k \otimes I_m).
$$
\end{lemma}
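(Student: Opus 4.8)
The plan is to establish the operator identity entrywise, by applying both sides to an arbitrary standard basis vector $e_{a,k}\otimes e_{b,m}$ (with $0\le a<k$, $0\le b<m$) and comparing coefficients. Under the chosen Kronecker convention $e_{a,k}\otimes e_{b,m}$ carries index $am+b$ in ${\sf R}^n$, while $e_{d,m}\otimes e_{c,k}$ carries index $dk+c$; so checking equality of the two operators amounts to matching the entry in row $dk+c$ and column $am+b$ on each side.

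First I would propagate the basis vector through the four factors of the right-hand side, using only the definitions of ${\mathcal F}_k$, $W^n_m$, $L^n_k$ together with the mixed-product rule for $\otimes$. The factor ${\mathcal F}_k\otimes I_m$ spreads $e_{a,k}$ into $\sum_{c}\omega_k^{ca}(e_{c,k}\otimes e_{b,m})$; the diagonal $W^n_m$ multiplies each term by its twiddle $\omega_n^{cb}$; then $I_k\otimes{\mathcal F}_m$ spreads $e_{b,m}$ into $\sum_{d}\omega_m^{db}(e_{c,k}\otimes e_{d,m})$; and finally $L^n_k$ swaps the tensor factors to $e_{d,m}\otimes e_{c,k}$. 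Collecting coefficients, the right-hand side sends $e_{a,k}\otimes e_{b,m}$ to
$$
\sum_{c=0}^{k-1}\sum_{d=0}^{m-1}\omega_k^{ca}\,\omega_n^{cb}\,\omega_m^{db}\,(e_{d,m}\otimes e_{c,k}),
$$
so its $(dk+c,\,am+b)$ entry is $\omega_k^{ca}\omega_n^{cb}\omega_m^{db}$.

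Next I would compute the same entry of the left-hand side directly: it is simply $({\mathcal F}_n)_{dk+c,\,am+b}=\omega_n^{(dk+c)(am+b)}$. Expanding the product in the exponent yields the four terms $dk\cdot am+dk\cdot b+c\cdot am+cb$, and the whole verification reduces to the three modular identities $\omega_n^{km}=\omega_n^{n}=1$, $\omega_n^{k}=\omega_m$, and $\omega_n^{m}=\omega_k$. Applying them term by term collapses $\omega_n^{(dk+c)(am+b)}$ to $\omega_m^{db}\,\omega_k^{ca}\,\omega_n^{cb}$, which is exactly the right-hand-side coefficient, completing the proof.

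The computation is routine; the only place requiring care — and the conceptual heart of the splitting rule — is the cross term $dk\cdot am=(da)\,n$, which vanishes modulo $n$ and is precisely what lets the factorization close. The surviving term $\omega_n^{cb}$ does not reduce to a pure $\omega_k$ or $\omega_m$ power, and this residual twiddle is exactly the contribution captured by $W^n_m$; recognizing that the twiddle is the sole obstruction to a clean tensor split is the main thing to get right.
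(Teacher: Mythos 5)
Your proof is correct and follows essentially the same route as the paper: an entrywise verification obtained by pushing a standard basis vector $e_{a,k}\otimes e_{b,m}$ through the four factors and matching the result against $(\mathcal{F}_n)_{dk+c,\,am+b}=\omega_n^{(dk+c)(am+b)}$ via $\omega_n^n=1$, $\omega_n^m=\omega_k$, $\omega_n^k=\omega_m$. The only cosmetic difference is that the paper handles the left factors $L^n_k(I_k\otimes\mathcal{F}_m)$ by a transposition trick (pairing against row covectors) while you propagate the vector forward through all four factors directly; the substance of the computation is identical.
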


\begin{proof}
Let's $0\le i<k$, $0\le j < m$. Then 
\begin{eqnarray*}
W^n_m({\mathcal F}_k \otimes I_m) (e_{i,k} \otimes e_{j,m}) & = \\
W^n_m(({\mathcal F}_k e_{i,k}) \otimes e_{j,m}) & = \\
((\omega_k^{ip} \omega_n^{pj})_{p=0}^{k-1} \otimes e_{j,m}).
\end{eqnarray*}
As $(L^n_k)^T=L^n_m$, then for $0\le p<k$, $0\le q < m$, 
\begin{eqnarray*}
(e_{q,m}^T \otimes e_{p,k}^T) L^n_k (I_k \otimes {\mathcal F}_m) & = \\
\bigg((I_k \otimes {\mathcal F}_m^T)L^n_m (e_{q,m} \otimes e_{p,k})\bigg)^T & = \\
\bigg((e_{p,k} \otimes {\mathcal F}_m^T e_{q,m})\bigg)^T.
\end{eqnarray*}
Then
\begin{equation*}
\begin{split}
(e_{q,m}^T \otimes e_{p,k}^T) L^n_k (I_k \otimes {\mathcal F}_m)W^n_m({\mathcal F}_k \otimes I_m) (e_{i,k} \otimes e_{j,m}) = \\
\omega_k^{ip} \omega_n^{pj} (e_{q,m}^T{\mathcal F}_m e_{j,m}) = \omega_n^{ipm + pj + kjq}.
\end{split}
\end{equation*}
From the definition of $\omega_n$ it follows that $\omega_n^n=1$. Then
\begin{equation*}
\begin{split}
\omega_n^{ipm + pj + kjq} = \omega_n^{(p+qk)(im+j) - iqn} = \\
(e_{q,m}^T\otimes e_{p,k}^T) {\mathcal F}_n (e_{i,k} \otimes e_{j,m}),
\end{split}
\end{equation*}

which proves the statement of the lemma.
\end{proof}

\begin{corollary} \label{FFT_split_IF}
Let us $n=km$, and $k, m \in \sf N$. Then
$$
{\mathcal F}_n = L^n_k(I_k \otimes {\mathcal F}_m) W^n_m L^n_m (I_m \otimes {\mathcal F}_k) L^n_k.
$$
\end{corollary}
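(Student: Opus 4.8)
The plan is to start from the splitting rule of Lemma~\ref{FFT_split}, which already gives ${\mathcal F}_n = L^n_k (I_k \otimes {\mathcal F}_m) W^n_m ({\mathcal F}_k \otimes I_m)$, and to observe that the target identity differs from it only in the trailing factor: the corollary writes $L^n_m (I_m \otimes {\mathcal F}_k) L^n_k$ where Lemma~\ref{FFT_split} carries $({\mathcal F}_k \otimes I_m)$. Thus the whole task reduces to proving the single commutation identity $({\mathcal F}_k \otimes I_m) = L^n_m (I_m \otimes {\mathcal F}_k) L^n_k$, after which substitution into Lemma~\ref{FFT_split} finishes the argument immediately.

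First I would recognize that this required identity is nothing but Lemma~\ref{commutat} with the two factors $k$ and $m$ interchanged. Lemma~\ref{commutat} reads $L^n_k (I_k \otimes {\mathcal F}_m) L^n_m = {\mathcal F}_m \otimes I_k$ under the hypothesis $n=km$. Since $n = km = mk$, the hypothesis is symmetric in $k$ and $m$, so renaming $k \leftrightarrow m$ throughout the lemma is legitimate and yields $L^n_m (I_m \otimes {\mathcal F}_k) L^n_k = {\mathcal F}_k \otimes I_m$, which is precisely what is needed.

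The only point that needs care — and the main, rather minor, obstacle — is making sure the sub- and superscripts on the transposition matrices survive the swap correctly: under $k \leftrightarrow m$ the matrix $L^n_k$ becomes $L^n_m$ and conversely, and one should check against Lemma~\ref{aux_LW} that $(L^n_m)^{-1} = L^n_k$, so that $L^n_m (I_m \otimes {\mathcal F}_k) L^n_k$ is genuinely a similarity transformation of the expected form. With that verified, I would substitute the swapped form of Lemma~\ref{commutat} into the trailing factor of Lemma~\ref{FFT_split} to obtain ${\mathcal F}_n = L^n_k (I_k \otimes {\mathcal F}_m) W^n_m L^n_m (I_m \otimes {\mathcal F}_k) L^n_k$, which is the claim. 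As an alternative that avoids appealing to the symmetry of Lemma~\ref{commutat}, the same commutation identity could be re-derived directly from the defining action of $L^n_m$ and $L^n_k$ on the Kronecker basis vectors $e_{i,k}\otimes e_{j,m}$, mirroring the short computation already carried out in the proof of Lemma~\ref{commutat}.
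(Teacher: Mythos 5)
Your proposal is correct and follows exactly the paper's own route: the paper proves the corollary by substituting the commutation identity of Lemma~\ref{commutat} into the splitting rule of Lemma~\ref{FFT_split}, which is precisely what you do. Your only addition is to spell out the step the paper leaves implicit --- that Lemma~\ref{commutat} must be applied with the roles of $k$ and $m$ interchanged (legitimate since $n=km=mk$) to rewrite ${\mathcal F}_k \otimes I_m$ as $L^n_m (I_m \otimes {\mathcal F}_k) L^n_k$ --- which is a sound and worthwhile clarification.
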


\begin{proof}
The result follows from \ref{FFT_split} by substitution with the formula from lemma \ref{commutat}.
\end{proof}

\section{Index inversion}

The general FFT formula results from iterating of splitting rule (corollary \ref{FFT_split_IF}). This formula includes multindex inversion described below.

\begin{definition}
Multi-index is a tuple $\alpha = (n_K, n_{K-1}, \ldots, n_0) \in \sf N^{K+1}$.
\end{definition}
For $N=\prod_{k=0}^K n_k$, any $n=0..N-1$ can be uniquely represented in the numeration system, generated by a multi-index $\alpha$, 
as a multi-index $p=(p_K, \ldots, p_0)$ using the condition
\begin{equation*}
\begin{split}
n = p_0 + n_0(p_1 + \cdots + n_{K-2}(p_{K-1} + n_{K-1}p_K) \cdots ), \\
0\le p_j < n_j, \quad 0\le j\le K.
\end{split}
\end{equation*}
In this case we will denote the relation between multi-index $p$ and number $n$ as: $p = n_{\alpha}$, $n = p^{\alpha}$. 
Multi-index $\alpha$ is generating a numbering system, multi-index $p$ --- matched with $\alpha$. 
Number of encoded numbers is denoted as $N=|\alpha|$.

From the definition it is clear that for any generating multi-indexes $\alpha$, $\beta$ and for any matched multi-indexes $p$, $q$, correspondingly,
$$
e_{p^{\alpha}, |\alpha|} \otimes e_{q^{\beta}, |\beta|} = e_{(p, q)^{(\alpha, \beta)}, |\alpha|\, |\beta|}.
$$

Multi-index inversion of  $p=(p_K, \ldots, p_0)$ is the reverse orderd multi-index: $p^{\star} = (p_0, \ldots, p_K)$. 
Multi-index $\alpha^{\star}=(n_0, \ldots, n_K)$ also generates the numbering system for numbers $0..N-1$. 
Transposition of multi-index inversion $\alpha$ is a transposition $P_{\alpha}$ of numbers $0..N-1$, definded by 
\begin{equation*}
\begin{split}
P\bigg(p_K + n_K(p_{K-1} + \cdots + n_2(p_1 + n_1 p_0) \cdots )\bigg) = \\ 
p_0 + n_0(p_1 + \cdots + n_{K-2}(p_{K-1} + n_{K-1}p_K) \cdots )
\end{split}
\end{equation*}
for $0\le p_j < n_j$, $0\le j\le K$. 
It can be rewritten as $P_{\alpha}n = ((n_{\alpha^{\star}})^\star)^{\alpha}$. 
So the transposition $P_{\alpha}$ maps numbers $0..|\alpha|-1$, in the numbering system generated by multi-index $\alpha^{\star}$ 
to numbers with inverted representation generated by multi-index $\alpha$.

Transposition $P_{\alpha}$ generates a linear transform in ${\sf R}^N$, which is denoted by $S_{\alpha}$. 
It is fully defined by $S_{\alpha} e_{n,N} = e_{P_{\alpha}n, N}$ for $0\le n <N$. 

From the definition of $L^{mn}_n$ it follows that if $\alpha=(m,n)$ is a pair, then $S_{\alpha} = L^{mn}_n$. 

\begin{lemma} \label{inverse_permut}
Let's $\alpha$ --- multi-index and $N = |\alpha|$. Let's $M>0$ and $\beta = (M, \alpha)$ --- multi-index extended with $M$. Then
$$
S_{(M, \alpha)} = (I_M \otimes S_{\alpha}) L^{NM}_N, \qquad S_{(\alpha, M)} = L^{NM}_M (I_M \otimes S_{\alpha}).
$$
\end{lemma}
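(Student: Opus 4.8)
The plan is to verify each of the two operator identities on a basis of ${\sf R}^{NM}$, exploiting that $S_\gamma$ is completely determined by $S_\gamma e_{n,NM}=e_{P_\gamma n,NM}$ together with $P_\gamma n=((n_{\gamma^\star})^\star)^\gamma$. For the first identity I would evaluate both sides on the vectors $e_{s,N}\otimes e_{r,M}$ with $0\le s<N$, $0\le r<M$, since this is the form on which $L^{NM}_N$ and $I_M\otimes S_\alpha$ act transparently; for the second identity I would instead use $e_{r,M}\otimes e_{s,N}$. Using $e_{i,N}\otimes e_{j,M}=e_{iM+j,NM}$ and $e_{j,M}\otimes e_{i,N}=e_{jN+i,NM}$, each operator identity then collapses to a purely combinatorial statement about the index permutation: for $\beta=(M,\alpha)$ I must show $P_{(M,\alpha)}(sM+r)=Nr+P_\alpha s$, and for $\beta'=(\alpha,M)$ I must show $P_{(\alpha,M)}(rN+s)=M\,P_\alpha s+r$.

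The next step is to record two elementary facts about multi-indices and use them to translate integer indices into digit tuples. First, reversal distributes over concatenation, $(\mu,\nu)^\star=(\nu^\star,\mu^\star)$, so $\beta^\star=(\alpha^\star,M)$ and $(\beta')^\star=(M,\alpha^\star)$. Second, the Kronecker identity $e_{p^\alpha,|\alpha|}\otimes e_{q^\delta,|\delta|}=e_{(p,q)^{(\alpha,\delta)},|\alpha|\,|\delta|}$ gives the weight rule $(p,q)^{(\alpha,\delta)}=p^\alpha|\delta|+q^\delta$; specialised to a single prepended or appended radix $M$ it reads $(r,p)^{(M,\alpha)}=Nr+p^\alpha$ and $(p,r)^{(\alpha,M)}=M\,p^\alpha+r$. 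With these in hand the first identity is a direct computation: expanding $sM+r$ in the system $\beta^\star=(\alpha^\star,M)$ splits it as $n_{\beta^\star}=(s_{\alpha^\star},r)$, because $M$ sits in the least-significant (trailing) slot of $\beta^\star$; reversing yields $(r,(s_{\alpha^\star})^\star)$; and re-reading this in $\beta=(M,\alpha)$ gives $Nr+((s_{\alpha^\star})^\star)^\alpha=Nr+P_\alpha s$, exactly as required.

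The appended-radix identity is the mirror image: now $M$ occupies the leading slot of $(\beta')^\star=(M,\alpha^\star)$ and carries weight $N$, so $rN+s$ expands as $(r,s_{\alpha^\star})$, reversal produces $((s_{\alpha^\star})^\star,r)$, and re-reading in $(\alpha,M)$ gives $M\,P_\alpha s+r$. In both cases the factor in front of the moved digit ($N$ in front of $r$, or $M$ in front of $P_\alpha s$) is precisely the index swap effected by the transposition matrix $L$, which is what forces the two sides to coincide.

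I expect the only genuine obstacle to be bookkeeping: keeping the digit positions, their radices, and their weights straight as the extra radix $M$ migrates from a leading to a trailing slot under the reversal $\star$, and confirming that the weight picked up by the new digit ($1$ versus $N$, or $M$ versus $1$) matches the factor produced by $L^{NM}_N$ or $L^{NM}_M$. Writing out the two helper facts $(\mu,\nu)^\star=(\nu^\star,\mu^\star)$ and the weight rule explicitly, and fixing once and for all which tensor factor is the high-order one, should neutralise this; everything else is a mechanical substitution into $P_\gamma n=((n_{\gamma^\star})^\star)^\gamma$.
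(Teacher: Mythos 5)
Your proposal is correct and takes essentially the same route as the paper's own proof: both reduce each operator identity to the combinatorial statement $P_{(M,\alpha)}(sM+r)=Nr+P_\alpha s$ (resp.\ $P_{(\alpha,M)}(rN+s)=MP_\alpha s+r$) by evaluating on tensor-product basis vectors, and then verify it via the reversal-of-concatenation and weight rules for multi-indices. The only difference is cosmetic: the paper writes out the first identity and dismisses the second as analogous, whereas you spell out both.
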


\begin{proof}
Let's $0\le m<M$ and $0\le n<N$. To prove the first equation, introduce $\beta=(M, \alpha)$. Then:
\begin{equation*}
\begin{split}
mN + P_{\alpha}n = (m, (n_{\alpha^{\star}})^{\star})^{(M, \alpha)} = [(n_{\alpha^{\star}},m)^{\star}]^{\beta} = \\ 
[((nM+m)_{\beta^{\star}})^{\star}]^\beta = P_{\beta} (m+nM).
\end{split}
\end{equation*}
Substitute the equation with the following formula:
\begin{equation*}
\begin{split}
(I_M \otimes S_{\alpha}) L^{NM}_N (e_{n, N} \otimes e_{m,M}) = \\
(I_M \otimes S_{\alpha}) (e_{m, M} \otimes e_{n, N}) = \\
e_{m, M} \otimes (S_{\alpha} e_{n, N}) = e_{m, M} \otimes e_{P_{\alpha} n, N} = e_{mN+P_{\alpha}n, MN} = \\ 
e_{P_{\beta}(m+nM)} = S_{\beta} e_{m+nM, MN} = S_{\beta} (e_{n, N} \otimes e_{m,M}),
\end{split}
\end{equation*}
which proves the first statement of the lemma \ref{inverse_permut}. 

The second statement of the lemma is proved in the same way.
\end{proof}

\section{General formula for FFT of arbitrary length with decimation in time (DIT)}
We will use the splitting rule to factorize the DFT matrix for computing using accelerator architecture on fig. \ref{fig:fft-base}.

If FFT length $N$ can be factored as $N = \prod_{k=0}^K n_k$,  
then FFT comprises only of butterflies of radices  $n_0, \ldots n_K$. 
This property is utilized for calculating FFT of length $N=2^n$. 
In the next statement, we generalize it to arbitrary $n_0, \ldots, n_K$. 

\begin{theorem} \label{FFT_iter}
Let's $\alpha = (n_K, \ldots, n_0) \in \sf N^{K+1}$ --- tuple of natural numbers and $N = \prod_{k=0}^K n_k$. For each $k=0, \ldots, K$ we define $N_k=\prod_{j=0}^k n_j$. Then
$$
{\mathcal F}_N = \left(\prod_{k=0}^K D_{k,\alpha}\right) S_{\alpha},
$$
where the matrices $D_{k,\alpha}$ are indexed right-to-left, 
$$
D_{k,\alpha} = A_k^{-1} (I_{N/n_k} \otimes {\mathcal F}_{n_k}) \widehat{W}_k A_k, \qquad 0\le k\le K,
$$
transposition matrices $A_k$ and diagonal matrices $\widehat{W}_k$ are defined by
$$
A_k = I_{N/N_k} \otimes L^{N_k}_{n_k}, \qquad \widehat{W}_k = I_{N/N_k} \otimes W^{N_k}_{n_k}.
$$
\end{theorem}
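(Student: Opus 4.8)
The plan is to prove the statement by induction on $K$, peeling off the leading radix $n_K$ and recognising the remaining factors as the same construction applied to the shorter multi-index $\alpha' = (n_{K-1}, \ldots, n_0)$, with $N' = |\alpha'| = N_{K-1}$. The base case $K = 0$ is immediate: there $\alpha = (n_0)$, so $S_\alpha = I_{n_0}$, while $A_0 = L^{n_0}_{n_0}$ and $\widehat{W}_0 = W^{n_0}_{n_0}$ both reduce to the identity; hence $D_{0,\alpha} = {\mathcal F}_{n_0}$ and the formula holds trivially.

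For the inductive step the first task is a bookkeeping identity relating the building blocks of the two problems. Using $N/N_k = n_K \cdot (N_{K-1}/N_k)$ for $0 \le k \le K-1$, each factor defining $D_{k,\alpha}$ splits off a common $I_{n_K}$: one checks $A_k = I_{n_K} \otimes A'_k$, $\widehat{W}_k = I_{n_K} \otimes \widehat{W}'_k$, and $I_{N/n_k} \otimes {\mathcal F}_{n_k} = I_{n_K} \otimes (I_{N_{K-1}/n_k} \otimes {\mathcal F}_{n_k})$, where the primed quantities are those of the $\alpha'$-problem. Since $(I_{n_K} \otimes P)(I_{n_K} \otimes Q) = I_{n_K} \otimes PQ$, this gives $D_{k,\alpha} = I_{n_K} \otimes D'_{k,\alpha'}$ for every $k < K$, so that $\prod_{k=0}^{K-1} D_{k,\alpha} = I_{n_K} \otimes \prod_{k=0}^{K-1} D'_{k,\alpha'}$. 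Applying the induction hypothesis to the right-hand product yields $\prod_{k=0}^{K-1} D_{k,\alpha} = I_{n_K} \otimes ({\mathcal F}_{N_{K-1}} S_{\alpha'}^{-1})$.

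Next I would handle the permutation. By Lemma \ref{inverse_permut}, applied with $M = n_K$ to $\alpha'$, we have $S_\alpha = S_{(n_K, \alpha')} = (I_{n_K} \otimes S_{\alpha'}) L^N_{N_{K-1}}$. Multiplying the product above by $S_\alpha$ on the right, the inverse permutations cancel, $(I_{n_K} \otimes S_{\alpha'}^{-1})(I_{n_K} \otimes S_{\alpha'}) = I_N$, leaving
$$
\left(\prod_{k=0}^K D_{k,\alpha}\right) S_\alpha = D_{K,\alpha}\, (I_{n_K} \otimes {\mathcal F}_{N_{K-1}})\, L^N_{N_{K-1}}.
$$
It remains to simplify the top factor $D_{K,\alpha}$. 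Because $N_K = N$, we have $A_K = L^N_{n_K}$, $\widehat{W}_K = W^N_{n_K}$, and $I_{N/n_K} \otimes {\mathcal F}_{n_K} = I_{N_{K-1}} \otimes {\mathcal F}_{n_K}$; combining this with $(L^N_{n_K})^{-1} = L^N_{N_{K-1}}$ from Lemma \ref{aux_LW} gives $D_{K,\alpha} = L^N_{N_{K-1}} (I_{N_{K-1}} \otimes {\mathcal F}_{n_K}) W^N_{n_K} L^N_{n_K}$.

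Substituting this expression, the right-hand side becomes exactly $L^N_{N_{K-1}}(I_{N_{K-1}} \otimes {\mathcal F}_{n_K}) W^N_{n_K} L^N_{n_K} (I_{n_K} \otimes {\mathcal F}_{N_{K-1}}) L^N_{N_{K-1}}$, which is the right-hand side of Corollary \ref{FFT_split_IF} with $k = N_{K-1}$ and $m = n_K$; it therefore equals ${\mathcal F}_N$, closing the induction. The main obstacle is the index bookkeeping of the first step, namely extracting the common $I_{n_K}$ from $A_k$, $\widehat{W}_k$ and the butterfly blocks so that the factorization telescopes, together with fixing the orientation in Lemma \ref{inverse_permut} (prepending rather than appending $n_K$): the wrong orientation would leave $L^N_{n_K}$ in place of $L^N_{N_{K-1}}$ and the final identification with Corollary \ref{FFT_split_IF} would break.
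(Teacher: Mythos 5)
Your proof is correct and takes essentially the same route as the paper's: induction on $K$ with the identical base case, the Kronecker bookkeeping $D_{k,\alpha} = I_{n_K} \otimes D_{k,\alpha'}$ for $k<K$, Lemma~\ref{inverse_permut} with $M=n_K$ to handle $S_\alpha$, and Corollary~\ref{FFT_split_IF} at $k=N_{K-1}$, $m=n_K$ to close the induction. The only difference is the direction of the algebra (you assemble the product and reduce it to the splitting-rule form, while the paper expands ${\mathcal F}_N$ via the splitting rule and substitutes the induction hypothesis), which is an equivalent reorganization of the same argument; incidentally, your index ranges $k=0,\ldots,K-1$ fix what appear to be typos ($k=1,\ldots,K-1$) in the paper's own proof.
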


\begin{proof} 
The theorem is proved by induction. Let's $K=0$, then $N=n_0$. By definition, $L^{n_0}_{n_0}=I_{n_0}$ then $A_0 = I_N$. As $W^{n_0}_{n_0} = I_{n_0}$ then $\widehat{W}_0 = I_N$. 
After substitution, the theorem statement becomes identity: ${\mathcal F}_N={\mathcal F}_{n_0}$.

Then we prove the induction step. Let's assume the theorem is proven for $K-1\ge 0$. Then, prove for $K$. We apply the corollary \ref{FFT_split_IF} with $m=n_K$, $k=N_{K-1}$, $n=N$:
$$
{\mathcal F}_N = L^N_{N_{K-1}}(I_{N_{K-1}} \otimes {\mathcal F}_{n_K}) W^N_{n_K} L^N_{n_K} (I_{n_K} \otimes {\mathcal F}_{N_{K-1}}) L^N_{N_{K-1}}.
$$
By definition, $A_K = L^N_{n_K}$ then by lemma \ref{aux_LW}: $A_K^{-1} = L^N_{N_{K-1}}$. Then
$$
{\mathcal F}_N = D_{k, \alpha} (I_{n_K} \otimes {\mathcal F}_{N_{K-1}}) L^N_{N_{K-1}}.
$$
Let's $\alpha_{K-1} = (n_{K-1}, \ldots, n_0)$. From the assumption of the induction step and Kronecker product properties, it follows that
$$
{\mathcal F}_{N_{K-1}} = \left(\prod_{k=0}^{K-1} D_{k,\alpha_{K-1}}\right) S_{\alpha_{K-1}}.
$$
Then
\begin{equation*}
\begin{split}
(I_{n_K} \otimes {\mathcal F}_{N_{K-1}}) L^N_{N_{K-1}} = \\
\left(\prod_{k=1}^{K-1} (I_{n_K} \otimes D_{k,\alpha_{K-1}}) \right) (I_{n_K} \otimes S_{\alpha_{K-1}}) L^N_{N_{K-1}}.
\end{split}
\end{equation*}
By lemma \ref{inverse_permut} 
$$
(I_{n_K} \otimes S_{\alpha_{K-1}}) L^N_{N_{K-1}} = S_{\alpha}.
$$
By Kronecker product properties for $1\le k\le K-1$
\begin{eqnarray*}
I_{n_K} \otimes (I_{N_{K-1}} \otimes L^{N_k}_{n_k}) & = & I_{N_K} \otimes L^{N_k}_{n_k} = A_k, \\
I_{n_K} \otimes (I_{N_{K-1}} \otimes L^{N_k}_{N_{k-1}}) & = & I_{N_K} \otimes L^{N_k}_{N_{k-1}} = A_k^{-1}, \\
I_{n_K} \otimes (I_{N_{K-1}} \otimes W^{N_k}_{n_k}) & = & I_{N_K} \otimes W^{N_k}_{n_k} = \widehat{W}_k.
\end{eqnarray*}
Then
$$
I_{n_K} \otimes D_{k, \alpha_{K-1}} = D_{k, \alpha}, \qquad 1\le k\le K-1.
$$
Substitution leads to the formula
$$
(I_{n_K} \otimes {\mathcal F}_{N_{K-1}}) L^N_{N_{K-1}} = \left(\prod_{k=1}^{K-1} D_{k,\alpha} \right) S_{\alpha},
$$
which proves the induction step. 
\end{proof}

In the previous theorem, all matrices have the structure $I_{N/n_k} \otimes X$, which allows mapping all the computations for a single butterfly ${\mathcal F}_{n_k}$ 
to a single iteration of the architecture template. The single stage of FFT $D_{k,\alpha}$ can be mapped to a single instruction of the architecture template.

The FFT input data index representation, complying with the algorithm from theorem \ref{FFT_iter} is generated by multi-index $\alpha^{\star}=(n_0, \ldots, n_K)$. 
The output data index representation is generated by multi-index $\alpha$. 
FFT stages $D_{k,\alpha}$ in the algorithm from the theorem \ref{FFT_iter} start from lower digits of the multi-index.

Transpositions $A_k$ from the theorem \ref{FFT_iter} can be interpreted in terms of numbering systems. 
Matrix $L^{N_k}_{n_k}=S_{(N_{k-1}, n_k)}$ transposes index $n_k$ to the end of the multi-index which is formalized by the next lemma \ref{L_permut}.

For arbitrary multi-index $\alpha = (n_K, \ldots, n_0)$ and $k=0..K$ we can define a multi-index $\bar{\alpha}_k$ by the equation
$$
\bar{\alpha}_k = (n_K, \ldots, n_{k+1}, n_{k-1}, \ldots, n_0, n_k).
$$
As $|\alpha|=|\bar{\alpha}_k|$, then both multi-indexes generate numbering systems on the same range $0..N, N=|\alpha|-1$. 
Multi-index mapping is denoted as 
$$
p = (p_K, \ldots, p_0) \rightarrow \bar{p}^k = (p_K, \ldots, p_{k+1}, p_{k-1}, \ldots, p_0, p_k).
$$ 

\begin{lemma} \label{L_permut}
Let's $A_k$ for $1\le k\le K$ --- transposition matrices from the theorem \ref{FFT_iter}. For each $k=0, \ldots, K$ we introduce the multi-index 
$$
\beta_k = (n_K, n_{K-1}, \ldots, n_{k+1}, n_0, n_1, \ldots, n_k).
$$

Then for $k=1, \ldots, K$: $\beta_k = \overline{(\beta_{k-1})}_k$ matrix $A_k$ does the following transposition $P$:
$$
n = p^{\beta_{k-1}} \quad \rightarrow \quad P n = (\bar{p}^k)^{\beta_k}, \qquad 0\le n\le |\alpha|-1.
$$
\end{lemma}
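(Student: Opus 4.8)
The plan is to treat each $k \in \{1, \ldots, K\}$ separately and to prove the two assertions of the lemma — the recursion $\beta_k = \overline{(\beta_{k-1})}_k$ and the transposition performed by $A_k$ — by direct computation on the standard basis. Since $A_k$ is a transposition (permutation) matrix, it is fully determined by its action $A_k e_{n,N} = e_{Pn,N}$, so it suffices to track how it moves basis vectors, and then to read the image off in the numbering system generated by $\beta_k$.

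First I would settle the purely combinatorial identity $\beta_k = \overline{(\beta_{k-1})}_k$ by unwinding the definitions. Writing $\beta_{k-1} = (n_K, \ldots, n_{k+1}, n_k, n_0, \ldots, n_{k-1})$ and counting positions from the right (position $0$ least significant), the radix $n_k$ occupies position $k$. The bar operation $\overline{(\cdot)}_k$ by definition removes the entry at position $k$ and appends it as the least significant digit, shifting positions $k-1, \ldots, 0$ up by one; this yields $(n_K, \ldots, n_{k+1}, n_0, \ldots, n_{k-1}, n_k)$, which is exactly $\beta_k$. The same positional bookkeeping applied to digits shows that $p \mapsto \bar p^k$ carries each digit $p_j$ together with its radix, so $\bar p^k$ is a legitimate matched multi-index for $\beta_k$ whenever $p$ is matched with $\beta_{k-1}$.

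Next I would compute the action of $A_k = I_{N/N_k} \otimes L^{N_k}_{n_k}$. The key step is to factor $e_{n,N}$, with $n = p^{\beta_{k-1}}$, as a Kronecker product by repeated use of $e_{p^{\alpha},|\alpha|} \otimes e_{q^{\beta},|\beta|} = e_{(p,q)^{(\alpha,\beta)},|\alpha|\,|\beta|}$. Splitting $\beta_{k-1}$ into its leading $N/N_k$ radices $(n_K, \ldots, n_{k+1})$ and its trailing $N_k = n_k N_{k-1}$ radices $(n_k, n_0, \ldots, n_{k-1})$, the trailing block itself factors as $e_{p_k, n_k} \otimes y$ with $y = e_{p_{k-1}, n_0} \otimes \cdots \otimes e_{p_0, n_{k-1}} \in {\sf R}^{N_{k-1}}$. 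Since $L^{N_k}_{n_k} = S_{(N_{k-1}, n_k)}$ acts as $e_{p_k, n_k} \otimes y \mapsto y \otimes e_{p_k, n_k}$ for $y \in {\sf R}^{N_{k-1}}$, the matrix $A_k$ leaves the leading block untouched and transposes the most significant digit $p_k$ of the $N_k$-block to its least significant slot. Reading the image in the ordering $\beta_k$ then gives the digits $(p_K, \ldots, p_{k+1}, p_{k-1}, \ldots, p_0, p_k) = \bar p^k$, i.e. $A_k e_{n,N} = e_{(\bar p^k)^{\beta_k},N}$, which is precisely $Pn = (\bar p^k)^{\beta_k}$.

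I expect the main obstacle to be index bookkeeping rather than any deep step: in the $\beta_{k-1}$ numbering the lower radices $(n_0, \ldots, n_{k-1})$ appear in forward order while positions are counted from the right, so the correspondence between a digit $p_j$, its position, and its radix is reflected, and one must verify carefully that the statement "$L^{N_k}_{n_k}$ transposes the leading $n_k$-factor of the $N_k$-block to its trailing slot" coincides with "$\bar p^k$ moves $p_k$ to the least significant position". Once the Kronecker factorization is aligned correctly the remaining manipulation is mechanical, and because the argument is carried out independently for each $k$, no separate base case or induction is required.
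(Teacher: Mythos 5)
Your proposal is correct and follows essentially the same route as the paper's own proof: factor $e_{n,N}$, $n=p^{\beta_{k-1}}$, into a Kronecker product matching the block structure $A_k = I_{N/N_k}\otimes L^{N_k}_{n_k}$, use $L^{N_k}_{n_k}=S_{(N_{k-1},n_k)}$ to swap the leading $n_k$-factor of the $N_k$-block to its trailing slot, and read the resulting digit tuple off in the numbering system generated by $\beta_k$. The only differences are cosmetic — you name digits by position while the paper names them by their associated radix $n_j$, and you spell out the identity $\beta_k=\overline{(\beta_{k-1})}_k$ that the paper dismisses as immediate from the definitions.
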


\begin{proof}
Let's $1\le k\le K$. Equation $\beta_k = \overline{(\beta_{k-1})}_k$ follows from the definition of multi-index mapping $\bar{\alpha}_k$ and the definition of multi-index $\beta_j$ with $0\le j\le K$.
By definition,
$$
A_k = I_{N/N_k} \otimes L^{N_k}_{n_k} = I_{N/N_k} \otimes S_{(N_{k-1}, n_k)}.
$$
Let's $p = (p_K, \ldots, p_k, p_0, p_1 \ldots, p_{k-1})$ --- multi-index in a numbering system generated by $\beta_{k-1}$, and $n = p^{\beta_{k-1}}$ --- multi-index value. 
Then
$$
A_k e_{n,N} = e_{m, N/N_k} \otimes S_{(n_k, N_{k-1})} e_{\ell, N_k}, 
$$
\begin{eqnarray*}
m & = & (p_K, \ldots p_{K+1})^{(n_K, \ldots, n_{k+1})}, \\
\ell & = & (p_k, p_0, \ldots p_{k-1})^{(n_k, n_0, \ldots, n_{k-1})} = (p_k, q)^{(n_k, N_{k-1})}, \\ 
q & = & (p_0, \ldots p_{k-1})^{(n_0, \ldots, n_{k-1})}.
\end{eqnarray*}
Then 
\begin{equation*}
\begin{split}
S_{(N_{k-1}, n_k)} e_{\ell, N_k} = S_{(N_{k-1}, n_k)} (e_{p_k, n_k} \otimes e_{q, N_{k-1}}) = \\
e_{q, N_{k-1}} \otimes e_{p_k, n_k} = e_{(q, p_k)^{(N_{k-1}, n_k)},N_k}, \\
(q, p_k)^{(N_{k-1}, n_k)} = (p_0, \ldots p_k)^{(n_0, \ldots, n_k)}, \\
A_k e_{n,N} = e_{m, N/N_k} \otimes e_{(p_0, \ldots p_k)^{(n_0, \ldots, n_k)},N_k} = e_{(\bar{p}^k)^{\beta_k}, N},
\end{split}
\end{equation*}
which proves the statement of the lemma \ref{L_permut}.
\end{proof}

The FFT algorithm from theorem \ref{FFT_iter} is usually called decimation in time (DIT). 
The characteristics of this algorithm are digit-reverse transposition $S_{\alpha}$ of the input vector and twiddle multiplication before the butterfly. 

\section{General formula for FFT of arbitrary length with decimation in frequency (DIF)}

Dual FFT implementation with digit-reverse transposition of the output vector is usually called decimation in frequency (DIF).
The formula for the DIF FFT is produced by transposing the FFT factorization from theorem \ref{FFT_iter}.

\begin{corollary} \label{FFT_iter_transpos}
Let's $\alpha = (n_K, \ldots, n_0) \in \sf N^{K+1}$ and $N = \prod_{k=0}^K n_k$. For each $k=0, \ldots, K$ we define $N_k=\prod_{j=0}^k n_j$. Then
$$
{\mathcal F}_N = S_{\alpha}^{-1} \left(\prod_{k=0}^K \widehat{D}_{k,\alpha}\right) ,
$$
where the matrices $\widehat{D}_{k,\alpha}$ are indexed left-to-right,
$$
\widehat{D}_{k,\alpha} = A_k^{-1} \widehat{W}_k (I_{N/n_k} \otimes {\mathcal F}_{n_k}) A_k, \qquad 0\le k\le K,
$$
transposition matrices $A_k$ and diagonal matrices $\widehat{W}_k$ are defined by equations
$$
A_k = I_{N/N_k} \otimes L^{N_k}_{n_k}, \qquad \widehat{W}_k = I_{N/N_k} \otimes W^{N_k}_{n_k}.
$$
\end{corollary}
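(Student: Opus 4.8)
The plan is to obtain the decimation-in-frequency factorization by transposing the decimation-in-time identity of Theorem~\ref{FFT_iter}, exploiting the symmetry of the DFT matrix. First I would record that ${\mathcal F}_N = [\omega_N^{kl}]_{0\le k,l<N}$ is symmetric, since $\omega_N^{kl}=\omega_N^{lk}$, so that ${\mathcal F}_N^T = {\mathcal F}_N$; the same symmetry holds for each ${\mathcal F}_{n_k}$.

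Next I would transpose both sides of ${\mathcal F}_N = \left(\prod_{k=0}^K D_{k,\alpha}\right) S_\alpha$. Transposition reverses the order of every product, so $S_\alpha$ passes to the left and the product over $k$, read right-to-left in the theorem, becomes read left-to-right:
$$
{\mathcal F}_N = {\mathcal F}_N^T = S_\alpha^T\left(\prod_{k=0}^K D_{k,\alpha}\right)^T = S_\alpha^T\left(D_{0,\alpha}^T\, D_{1,\alpha}^T\cdots D_{K,\alpha}^T\right).
$$
It then remains to identify each transposed factor.

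The core computation is to show $D_{k,\alpha}^T = \widehat{D}_{k,\alpha}$. I would transpose $D_{k,\alpha} = A_k^{-1}(I_{N/n_k}\otimes {\mathcal F}_{n_k})\widehat{W}_k A_k$ factor by factor. The transposition matrix $A_k = I_{N/N_k}\otimes L^{N_k}_{n_k}$ is a permutation matrix, hence orthogonal, so $A_k^T = A_k^{-1}$ and $(A_k^{-1})^T = A_k$. The exponent matrix $\widehat{W}_k = I_{N/N_k}\otimes W^{N_k}_{n_k}$ is diagonal, so $\widehat{W}_k^T = \widehat{W}_k$. Together with the symmetry of ${\mathcal F}_{n_k}$, which gives $(I_{N/n_k}\otimes {\mathcal F}_{n_k})^T = I_{N/n_k}\otimes {\mathcal F}_{n_k}$, transposing $D_{k,\alpha}$ and reversing the order of its four factors yields
$$
D_{k,\alpha}^T = A_k^{-1}\widehat{W}_k(I_{N/n_k}\otimes {\mathcal F}_{n_k})A_k = \widehat{D}_{k,\alpha}.
$$
Since $S_\alpha$ is likewise a permutation matrix, $S_\alpha^T = S_\alpha^{-1}$, and substituting both identities into the transposed equation produces exactly the stated formula.

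I do not anticipate a genuine obstacle here: every property required --- symmetry of the DFT and of the diagonal $\widehat{W}_k$, and orthogonality of the permutation matrices $A_k$ and $S_\alpha$ --- follows immediately from the definitions given earlier in the paper. The only step demanding care is the bookkeeping of the product orientation, namely confirming that transposition turns the right-to-left indexing of $\prod_{k=0}^K D_{k,\alpha}$ into the left-to-right indexing claimed for $\prod_{k=0}^K \widehat{D}_{k,\alpha}$.
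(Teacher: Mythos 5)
Your proposal is correct and follows exactly the paper's own argument: transpose the factorization of Theorem~\ref{FFT_iter}, use the symmetry of the DFT matrix, and substitute $A_k^T = A_k^{-1}$, $(S_{\alpha})^T = S_{\alpha}^{-1}$, and the symmetry of the diagonal matrices $\widehat{W}_k$. Your version simply spells out the product-order reversal and the factor-by-factor identification $D_{k,\alpha}^T = \widehat{D}_{k,\alpha}$ that the paper leaves implicit.
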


\begin{proof}
From the definition of the matrix ${\mathcal F}_N$ follows that it is symmetric: ${\mathcal F}_N=({\mathcal F}_N)^T$. 
The statement of the corollary is concluded by transposing the formula from theorem \ref{FFT_iter} and substituting equations $A_k^T = A_k^{-1}$, 
$(S_{\alpha})^T=S_{\alpha}^{-1}$ and $(W^m_n)^T=W^m_n$.
\end{proof}

The next theorem provides the formula for DIF FFT, with stage radices $m_0, \ldots, m_K$. 

\begin{theorem} \label{FFT_iter_freq}
Let's $\beta = (m_K, \ldots, m_0) \in \sf N^{K+1}$ and $N = \prod_{k=0}^K m_k$. For each $k=0, \ldots, K$ we define $M_k=\prod_{j=k}^K m_j$. Then
$$
{\mathcal F}_N = S_{\beta} \left(\prod_{k=0}^K \widetilde{D}_{k,\beta}\right),
$$
where the matrices $\widetilde{D}_{k,\beta}$ are indexed right-to-left, 
$$
\widetilde{D}_{k,\beta} = B_k^{-1} \widetilde{W}_k (I_{N/m_k} \otimes {\mathcal F}_{m_k}) B_k, \qquad 0\le k\le K,
$$
transposition matrices $B_k$ and diagonal matrices $\widehat{W}_k$ are defined by equations
$$
B_k = I_{N/M_k} \otimes L^{M_k}_{m_k}, \qquad \widetilde{W}_k = I_{N/M_k} \otimes W^{M_k}_{m_k}.
$$
\end{theorem}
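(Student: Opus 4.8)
The plan is to derive Theorem~\ref{FFT_iter_freq} directly from Corollary~\ref{FFT_iter_transpos} by reversing the generating multi-index, so that no new induction is required. Given $\beta = (m_K, \ldots, m_0)$ I would set $\alpha = \beta^{\star} = (m_0, \ldots, m_K)$ and rename its entries $\alpha = (n_K, \ldots, n_0)$ in the notation of the corollary, so that $n_j = m_{K-j}$, equivalently $m_k = n_{K-k}$. The entire argument then reduces to checking that, under this relabelling, each ingredient of the theorem coincides with the corresponding ingredient of the corollary.

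First I would record the auxiliary identity $S_{\alpha^{\star}} = S_{\alpha}^{-1}$. From the definition $P_{\alpha}n = ((n_{\alpha^{\star}})^{\star})^{\alpha}$ together with $(\alpha^{\star})^{\star} = \alpha$, a short computation gives $P_{\alpha^{\star}}(P_{\alpha}n) = n$ (applying the inversion $(\cdot)^{\star}$ twice cancels it), and symmetrically $P_{\alpha}(P_{\alpha^{\star}}n) = n$. Hence $P_{\alpha}$ and $P_{\alpha^{\star}}$ are mutually inverse, so $S_{\beta} = S_{\alpha^{\star}} = S_{\alpha}^{-1}$, which already matches the left factor of the corollary.

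Next I would match the index data. With $N_k = \prod_{j=0}^{k} n_j$ as in the corollary and $M_k = \prod_{j=k}^{K} m_j$ as in the theorem, the substitution $m_k = n_{K-k}$ gives $M_k = \prod_{j=k}^{K} n_{K-j} = \prod_{i=0}^{K-k} n_i = N_{K-k}$, so also $N/M_k = N/N_{K-k}$. Consequently $B_k = I_{N/M_k}\otimes L^{M_k}_{m_k} = I_{N/N_{K-k}}\otimes L^{N_{K-k}}_{n_{K-k}} = A_{K-k}$ and likewise $\widetilde{W}_k = \widehat{W}_{K-k}$, whence $\widetilde{D}_{k,\beta} = B_k^{-1}\widetilde{W}_k(I_{N/m_k}\otimes {\mathcal F}_{m_k})B_k = \widehat{D}_{K-k,\alpha}$.

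Finally I would reconcile the two product conventions. The theorem's product $\prod_{k=0}^{K}\widetilde{D}_{k,\beta}$ is indexed right-to-left, that is $\widetilde{D}_{K,\beta}\cdots\widetilde{D}_{0,\beta}$; substituting $\widetilde{D}_{k,\beta} = \widehat{D}_{K-k,\alpha}$ turns this into $\widehat{D}_{0,\alpha}\widehat{D}_{1,\alpha}\cdots\widehat{D}_{K,\alpha}$, which is exactly the left-to-right product $\prod_{k=0}^{K}\widehat{D}_{k,\alpha}$ appearing in the corollary. Therefore $S_{\beta}\big(\prod_{k=0}^{K}\widetilde{D}_{k,\beta}\big) = S_{\alpha}^{-1}\big(\prod_{k=0}^{K}\widehat{D}_{k,\alpha}\big) = {\mathcal F}_N$, proving the theorem. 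I expect the only genuine pitfall to be bookkeeping: keeping the opposite product orderings straight under the reindexing $k\mapsto K-k$, and confirming that reversing the multi-index is precisely what exchanges the ascending partial products $N_k$ of the corollary for the descending partial products $M_k$ of the theorem.
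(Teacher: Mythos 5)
Your proposal is correct and follows essentially the same route as the paper's own proof: set $\alpha=\beta^{\star}$, observe $S_{\beta}=S_{\alpha}^{-1}$ and $M_k=N_{K-k}$, hence $B_k=A_{K-k}$, $\widetilde{W}_k=\widehat{W}_{K-k}$, $\widetilde{D}_{k,\beta}=\widehat{D}_{K-k,\alpha}$, and apply Corollary~\ref{FFT_iter_transpos}. Your write-up is in fact slightly more careful than the paper's, since you explicitly verify $S_{\alpha^{\star}}=S_{\alpha}^{-1}$ and the reversal of the product ordering, both of which the paper leaves implicit.
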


\begin{proof}
We define $\alpha=\beta^{\star} = (m_0, \ldots, m_K)$, so $m_k=n_{K-k}$ for $0\le k\le K$. 
Then $S_{\alpha}^{-1} = S_{\alpha^{\star}}=S_{\beta}$ and using the corollary \ref{FFT_iter_transpos}: 
$$
N_k = \prod_{j=0}^k m_{K-j} = \prod_{i=K-k}^K m_i = M_{K-k}, \qquad 0\le k\le K.
$$
Then $\widehat{W_k} = \widetilde{W}_{K-k}$, $A_k = B_{K-k}$ and consequently $\widehat{D}_{k, \alpha} = \widetilde{D}_{K-k, \beta}$. 
\end{proof}
 
The FFT input data index representation, complying with the algorithm from the theorem \ref{FFT_iter_freq} is generated by the multi-index $\beta^{\star}=(m_0, \ldots, m_K)$. 
The output data index representation is generated by the multi-index $\beta$. 
FFT stages $\widetilde{D}_{k,\beta}$ in the algorithm from the theorem \ref{FFT_iter_freq} start from higher digits of the multi-index.

The FFT implementation from the theorem \ref{FFT_iter_freq} calculates twiddle multiplications with matrix $W^n_k$ after the butterfly in each stage $\widetilde{D}_{k,\beta}$. 
For better accuracy, it is desirable to multiply before the butterfly~\cite{4626107}, as the internal number representation in PU has a wider mantissa. 
Moreover, this order of computations allows unifying architecture template fig.~\ref{fig:fft-base} between DIT and DIF FFT.
The next theorem provides a modification of the theorem \ref{FFT_iter_freq}, with multiplication before the butterfly.

Let's $k, m, n \in \sf N$. We define a diagonal matrix $V_{k,m,n}$ with size $N=kmn$ by equations
$$
V_{k,m,n} (e_{i,k} \otimes e_{j,m} \otimes e_{\ell, n}) = \omega_{kmn}^{i(\ell m+j)} (e_{i,k} \otimes e_{j,m} \otimes e_{\ell, n})
$$
for $0\le i<k$, $0\le j<m$, $0\le \ell<n$.

\begin{theorem} \label{FFT_iter_freqW}
Let's $\beta = (m_K, \ldots, m_0) \in \sf N$ and $N = \prod_{k=0}^K m_k$. For each $k=0, \ldots, K$ we define $M_k=\prod_{j=k}^K m_j$. Then
$$
{\mathcal F}_N = S_{\beta} \left(\prod_{k=0}^K E_{k,\beta}\right),
$$
where the matrices $E_{k,\beta}$ are indexed right-to-left, 
$$
E_{k,\beta} = B_k^{-1} (I_{N/m_k} \otimes {\mathcal F}_{m_k}) \widetilde{X}_k B_k, \qquad 0\le k\le K,
$$
transposition matrices $B_k$ and diagonal matrices $\widehat{X}_k$ are defined by equations for $0\le k\le K$
\begin{equation*}
\begin{split}
B_k = I_{N/M_k} \otimes L^{M_k}_{m_k}, \quad \widetilde{X}_k = I_{N/M_k} \otimes V_{m_k, m_{k+1}, M_{k+2}}, \\
m_{K+1}=M_{K+1}=M_{K+2}=1.
\end{split}
\end{equation*}
\end{theorem}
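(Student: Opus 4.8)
The plan is to deduce the statement from the already-established decimation-in-frequency factorization of Theorem~\ref{FFT_iter_freq}, in which the diagonal twiddle $\widetilde{W}_k$ sits \emph{after} the butterfly of its stage. Since the outer digit-reversal $S_\beta$ and the transposition matrices $B_k$ are identical in both statements, the whole theorem reduces to the single claim that the two stage products agree,
\[
\prod_{k=0}^K E_{k,\beta} \;=\; \prod_{k=0}^K \widetilde{D}_{k,\beta};
\]
that is, relocating every twiddle from behind its butterfly to in front of a butterfly must leave the overall operator unchanged. I would treat this identity as the heart of the proof.

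First I would put every stage into a normal form by pushing the conjugating transpositions through the butterfly. Using $B_k=I_{N/M_k}\otimes L^{M_k}_{m_k}$ together with Lemma~\ref{commutat} and the identities $L^{M_k}_{m_k}=(L^{M_k}_{M_{k+1}})^{-1}$ and $W^{M_k}_{M_{k+1}}=L^{M_k}_{M_{k+1}}W^{M_k}_{m_k}L^{M_k}_{m_k}$ from Lemma~\ref{aux_LW}, each stage of Theorem~\ref{FFT_iter_freq} collapses to $\widetilde{D}_{k,\beta}=I_{N/M_k}\otimes\bigl(W^{M_k}_{M_{k+1}}({\mathcal F}_{m_k}\otimes I_{M_{k+1}})\bigr)$, a plain twiddle-after-butterfly acting on the leading digit of the length-$M_k$ tail. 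The same manipulation turns each $E_{k,\beta}$ into the twiddle-before-butterfly form $I_{N/M_k}\otimes\bigl(({\mathcal F}_{m_k}\otimes I_{M_{k+1}})\,\widetilde{v}_k\bigr)$, where $\widetilde{v}_k$ is the diagonal obtained by conjugating $V_{m_k,m_{k+1},M_{k+2}}$ by $L^{M_k}_{m_k}$.

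The decisive observation is that $V_{k,m,n}$ carries exactly the phase $\omega_{kmn}^{\,i(\ell m+j)}$, i.e. the product of the leading digit $i$ with the combined lower index $\ell m+j$; this is precisely the twiddle $W^{kmn}_{mn}$ up to a reordering of its two low factors. Hence each $V_{m_k,m_{k+1},M_{k+2}}$ is, after the transposition inside $B_k$, the same diagonal that arises between two consecutive butterflies of the frequency factorization. I would then rewrite the normalized product $\widetilde{D}_K\cdots\widetilde{D}_0$ by regrouping: since $W^{M_K}_{M_{K+1}}=I$, slide each post-twiddle $W^{M_k}_{M_{k+1}}$ to the right of the next butterfly so that it becomes that butterfly's pre-twiddle, and recognize the slid diagonal as the stated $V$-matrix by matching exponents of $\omega_N$ modulo $N$ via $\omega_N^N=1$, exactly in the style of Lemma~\ref{FFT_split}. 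An equivalent and perhaps cleaner route is a direct induction on $K$ mirroring Theorem~\ref{FFT_iter}: peel one butterfly with the splitting rule of Corollary~\ref{FFT_split_IF}, whose twiddle $W$ already stands \emph{before} the $I\otimes{\mathcal F}$ factor, apply the induction hypothesis to the remaining transform, and fold $W$ together with the leading transposition of the recursive factorization into a single $V$-twiddle, assembling $S_\beta$ through Lemma~\ref{inverse_permut}.

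The step I expect to be the main obstacle is exactly this twiddle transport. A twiddle behind butterfly~$k$ depends on the digit that butterfly has just produced and on all lower digits, so it cannot simply be commuted in front of the \emph{same} butterfly; it must be carried across to the neighbouring stage, where the part of it that meets the next radix is absorbed by the transposition inside $B_{k+1}$ and the remainder is reorganized into the three-factor shape of $V_{m_k,m_{k+1},M_{k+2}}$. Making the carries close up consistently at both ends of the pipeline, and checking that the resulting per-stage diagonals are precisely the $V$-matrices of the statement, is the delicate multi-index computation; the special product form of $V_{k,m,n}$ is engineered to make this bookkeeping balance.
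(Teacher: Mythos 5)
Your proposal is correct in approach and is essentially the paper's own proof: the paper likewise starts from Theorem~\ref{FFT_iter_freq}, isolates the inter-stage factor $H_k=B_{k+1}B_k^{-1}\widetilde{W}_k$, rewrites $B_k^{-1}\widetilde{W}_k=(I_{N/M_k}\otimes W^{M_k}_{M_{k+1}})B_k^{-1}$ via Lemma~\ref{aux_LW}, and then verifies on basis vectors that carrying this diagonal across the transposition inside $B_{k+1}$ yields precisely the $V$-matrix --- your ``twiddle transport'' --- with the pipeline closed by the observation that $\widetilde{W}_K=I$. Your preliminary normalization through Lemma~\ref{commutat} and the sketched inductive alternative are only cosmetic variants of the paper's presentation, which shares (and likewise glosses over) the same delicate end-of-pipeline index bookkeeping you flag.
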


\begin{proof}
In the formula in the theorem \ref{FFT_iter_freq} between neighbor butterflies $(I_{N/m_{k+1}} \otimes {\mathcal F}_{m_{k+1}})$ and $(I_{N/m_k} \otimes {\mathcal F}_{m_k})$ for $0\le k\le K{-}1$ appears the term
\begin{equation*}
\begin{split}
H_k = B_{k+1} B_k^{-1} \widetilde{W}_k = \\
(I_{N/M_{k+1}} \otimes L^{M_{k+1}}_{m_{k+1}}) (I_{N/M_k} \otimes L^{M_k}_{M_{k+1}}) (I_{N/M_k} \otimes W^{M_k}_{m_k}).
\end{split}
\end{equation*}
We want to prove that $H_k = \widetilde{X}_k B_{k+1} B_k^{-1}$, then the statement of the theorem results from the rearrangement of the terms in the formula for ${\mathcal F}_N$. 
For $k=K$ stage $\widetilde{D}_{K, \beta}$ does not contain twiddle multiplication, as $M_K=m_K$ and $W^{M_K}_{m_K} = I$ --- identity matrix. 

Let's $0\le k\le K-1$. From the properties of the Kronecker product and from lemma \ref{aux_LW}
\begin{eqnarray*}
B_k^{-1} \widetilde{W}_k & = & I_{N/M_k} \otimes (L^{M_k}_{M_{k+1}}W^{M_k}_{m_k}) \\
& = & I_{N/M_k} \otimes (W^{M_k}_{M_{k+1}} L^{M_k}_{M_{k+1}}) \\
& = & (I_{N/M_k} \otimes W^{M_k}_{M_{k+1}}) (I_{N/M_k} \otimes L^{M_k}_{M_{k+1}}) \\
& = & (I_{N/M_k} \otimes W^{M_k}_{M_{k+1}}) B_k^{-1}.
\end{eqnarray*}
The common multiplier $I_{N/M_k}$ in the Kronecker product can be factored out, so it is enough to prove that
$$
(I_{m_k} \otimes L^{M_{k+1}}_{m_{k+1}}) W^{M_k}_{M_{k+1}} = V_{m_{k+1}, m_k, M_{k-1}} (I_{m_k} \otimes L^{M_{k+1}}_{m_{k+1}}).
$$
Let's $0\le N_{k+2} < M_{k+2}$, $0\le n_k < m_k$, $0\le n_{k+1} < m_{k+1}$.
Then $0\le n_{k+1} M_{k+2} + N_{k+2} <M_{k+1}$, by substitution to the left term of the equation, it is transformed to identity
\begin{equation*}
\begin{split}
(I_{m_k} \otimes L^{M_{k+1}}_{m_{k+1}}) W^{M_k}_{M_{k+1}} \times \\
(e_{n_k,m_k} \otimes (e_{n_{k+1},m_{k+1}} \otimes e_{N_{k+2},M_{k+2}})) = \\
\omega_{M_k}^{n_k(n_{k+1} M_{k+2} + N_{k+2})} (e_{n_k,m_k} \otimes e_{N_{k+2},M_{k+2}} \otimes e_{n_{k+1},m_{k+1}})  = \\
V_{m_k, m_{k+1}, M_{k+2}} (e_{n_k,m_k} \otimes e_{N_{k+2},M_{k+2}} \otimes e_{n_{k+1},m_{k+1}}) = \\
V_{m_k, m_{k+1}, M_{k+2}} (I_{m_k} \otimes L^{M_{k+1}}_{m_{k+1}}) \times \\
(e_{n_k,m_k} \otimes (e_{n_{k+1},m_{k+1}} \otimes e_{N_{k+2},M_{k+2}})),
\end{split}
\end{equation*}
which proves the statement of the lemma. 
\end{proof}

\section{Conclusion}
The paper proposes a general approach for design and implementation of memory based vector accelerators. 
The applicability of the approach is demonstrated only for 1r1w SRAM based accelerator and DIT/DIF FFT which can be used for fast circular convolution.
The approach can be applied to other memory architectures, i.e. 1rw SRAM and other FFT such as self-sorting FFT.

\section*{Acknowledgment}
The author thanks professor \fbox{Andrey Barabanov} for his invaluable comments and advice.



\bibliographystyle{IEEEtran}
\bibliography{main}

\begin{thebibliography}{1}
\providecommand{\url}[1]{#1}
\csname url@samestyle\endcsname
\providecommand{\newblock}{\relax}
\providecommand{\bibinfo}[2]{#2}
\providecommand{\BIBentrySTDinterwordspacing}{\spaceskip=0pt\relax}
\providecommand{\BIBentryALTinterwordstretchfactor}{4}
\providecommand{\BIBentryALTinterwordspacing}{\spaceskip=\fontdimen2\font plus
\BIBentryALTinterwordstretchfactor\fontdimen3\font minus \fontdimen4\font\relax}
\providecommand{\BIBforeignlanguage}[2]{{%
\expandafter\ifx\csname l@#1\endcsname\relax
\typeout{** WARNING: IEEEtran.bst: No hyphenation pattern has been}%
\typeout{** loaded for the language `#1'. Using the pattern for}%
\typeout{** the default language instead.}%
\else
\language=\csname l@#1\endcsname
\fi
#2}}
\providecommand{\BIBdecl}{\relax}
\BIBdecl

\bibitem{johnson1992conflict}
L.~Johnson, ``\BIBforeignlanguage{english}{Conflict free memory addressing for dedicated fft hardware},'' \emph{\BIBforeignlanguage{english}{Circuits and Systems II: Analog and Digital Signal Processing, IEEE Transactions on}}, vol.~39, no.~5, pp. 312--316, 1992.

\bibitem{jo2005new}
B.~G. Jo and M.~H. Sunwoo, ``\BIBforeignlanguage{english}{New continuous-flow mixed-radix (cfmr) fft processor using novel in-place strategy},'' \emph{\BIBforeignlanguage{english}{Circuits and Systems I: Regular Papers, IEEE Transactions on}}, vol.~52, no.~5, pp. 911--919, 2005.

\bibitem{hsiao2010generalized}
C.-F. Hsiao, Y.~Chen, and C.-Y. Lee, ``\BIBforeignlanguage{english}{A generalized mixed-radix algorithm for memory-based fft processors},'' \emph{\BIBforeignlanguage{english}{Circuits and Systems II: Express Briefs, IEEE Transactions on}}, vol.~57, no.~1, pp. 26--30, 2010.

\bibitem{salishev2014continuous}
S.~I. Salishchev, ``Continuous-flow conflict-free mixed-radix fast fourier transform in multi-bank memory,'' Intel Corporation, Nov.~19 2015, uS Patent App. 13/878,672.

\bibitem{salishev2013fft}
S.~Salishev and R.~Shein, ``Novel algorithms for continuous-flow mix-radix in-place multi-bank ram-based fft,'' \emph{Computer tools in education}, no.~2, pp. 18--30, 2013.

\bibitem{4626107}
W.-H. Chang and T.~Nguyen, ``\BIBforeignlanguage{english}{On the fixed-point accuracy analysis of fft algorithms},'' \emph{\BIBforeignlanguage{english}{Signal Processing, IEEE Transactions on}}, vol.~56, no.~10, pp. 4673--4682, Oct 2008.

\end{thebibliography}

\end{document}